
\documentclass[letterpaper, 10 pt, conference]{cssconf}  

\IEEEoverridecommandlockouts                              
\overrideIEEEmargins

\usepackage{graphicx}
\usepackage{amssymb, amsmath, amsfonts}
\usepackage{epstopdf} 
\usepackage{enumerate}

\newtheorem{theorem}{Theorem}

\newtheorem{remark}{Remark}



\title{\LARGE \bf
Detecting Integrity Attacks on Control Systems using a Moving Target Approach
}


\author{Sean Weerakkody~~~~~~~  Bruno Sinopoli
\thanks{S. Weerakkody and B. Sinopoli are with the Department of Electrical and Computer Engineering, 
			Carnegie Mellon University, Pittsburgh, PA, USA 15213. Email: {\tt\small sweerakk@andrew.cmu.edu, brunos@ece.cmu.edu}}
\thanks{S. Weerakkody is supported in part by the Department of Defense (DoD) through the National Defense Science \& Engineering Graduate Fellowship (NDSEG) Program. The work by S. Weerakkody,  and B. Sinopoli is supported by NSF grant CNS-1329936 CPS: Synergy: Collaborative Research: Event-Based Information Acquisition, Learning, and Control in High-Dimensional Cyber-Physical Systems}}

\begin{document}
\maketitle
\thispagestyle{empty}
\pagestyle{empty}
\begin{abstract}
Maintaining the security of control systems in the presence of integrity attacks is a significant challenge. In literature, several possible attacks against control systems have been formulated including replay, false data injection, and zero dynamics attacks. The detection and prevention of these attacks may require the defender to possess a particular subset of trusted communication channels. 
Alternatively, these attacks can be prevented by keeping the system model secret from the adversary. In this paper, we consider an adversary who has the ability to modify and read all sensor and actuator channels. To thwart this adversary, we introduce external states dependent on the state of the control system, with linear time-varying dynamics unknown to the adversary. We also include sensors to measure these states. The presence of unknown time-varying dynamics is leveraged to detect an adversary who simultaneously aims to identify the system and inject stealthy outputs. Potential attack strategies and bounds on the attacker's performance are provided.
\end{abstract}

\section{Introduction}
Cyber-Physical systems (CPSs), referring to the tight interconnection of sensing, communication, and control in physical spaces, are becoming widespread in today's society. Indeed, these systems will serve a significant role in several applications including transportation, water distribution, medical technologies, manufacturing, and of course the smart grid. Due to the proliferation of CPSs in critical infrastructures, their safety and security are of paramount importance. There have already been several powerful attacks against CPSs. One major example is Stuxnet, which targeted Supervisory Control and Data Acquisition (SCADA) systems at uranium enrichment facilities in Iran \cite{Chen2010,Langner2013}. Here, the adversary was able to appropriate controllers running centrifuges at the plant, and avoid detection by replaying previous measurements to the system operator. An additional example is the Maroochy Shire incident where a disgruntled employee performed an attack on a SCADA based sewage control system \cite{Slay2008}.

Previous work \cite{Cardenas:2008ke} has suggested that existing tools in cyber security are insufficient to address attacks on CPSs due to the underlying physical system. Two main classes of attacks defined by \cite{Cardenas:2008ke} are denial of service attacks where an attacker restricts the flow of information between the plant and control center, and integrity attacks where an adversary can alter control inputs and sensor outputs. An intelligent adversary can potentially  
cause physical damage to a system using access to control inputs while manipulating sensor measurements to avoid detection. As such, integrity attacks are the main focus of this paper.

Several integrity attacks have been investigated in the literature. For instance, \cite{teixeira2012, PasqualettiJournal}  analyze zero dynamics attacks where an adversary injects inputs into both the actuators and sensors so as to bias the state without inserting a net bias on the sensor measurements. False data injection attacks on measurements, where an adversary alters a subset of sensor measurements to induce destabilizing control inputs from the defender have also been studied.  Liu et. al.~\cite{liu2009} first studied false data injection attacks in the context of electricity grids. Furthermore, in \cite{moscs10security}, the authors consider false data injection in control systems, providing sufficient and necessary conditions for an attacker to destabilize a system while introducing a bounded bias on measurement residues. Finally, replay attacks where an adversary repeats a sequence of past measurements are analyzed in \cite{Chabukswar2013, Mo2014}. 

The detection and prevention of integrity attacks on control systems against adversaries who are aware of the system model rely on the presence of one or more secure communication channels between the  operator and the plant. For instance, \cite{PasqualettiJournal} provides sufficient and necessary conditions for zero dynamics attacks based on the actuators and sensors in possession of the adversary. If the adversary has access to all sensors and actuators, a trivial zero dynamics attack is to subtract ones influence from the true measurements. To prevent false data injection attacks in control systems, a particular subset of measurements must be secure from the adversary \cite{moscs10security}. Moreover, \cite{henrik2010} proposes assigning security indices to  each sensor to quantify the effort required for an adversary to introduce a successful false data injection attack. Physical watermarking, used to detect replay attacks in \cite{Chabukswar2013, Mo2014} and robust attacks defined in \cite{Weerakkody2014}, relies on the ability to inject secret noisy inputs into the control system.  Also, \cite{Fawzi2014} which considers the problem of robust estimation and control in the presence of integrity attacks, relies on the assumption that the attacker is only able to manipulate less than half the sensors.

In this paper, we consider the scenario where an adversary has access to all communication channels. Thus, to prevent an attack, an adversary must not be aware of the full system model. \cite{teixeira2012pt2} considers the problem of altering system matrices to avoid zero dynamics attacks. However, in practice an adversary can use his access to both inputs and outputs to identify the system. Moreover, a malicious insider such as the attacker in the Maroochy Shire incident might be aware of the system model. Consequently, we propose introducing extraneous states correlated to the ordinary states of the system so that modification of the original states will impact the extraneous states. The extraneous states will have linear time-varying dynamics, known to the system operator and hidden from the adversary. The dynamics act as a moving target, changing fast enough so the adversary does not have adequate opportunity to identify the extraneous system. In this scenario, we propose attacks for the adversary and obtain detection bounds.

The rest of the paper is organized as follows. In Section II, we introduce our system model and control strategy. In Section III, we propose the moving target approach to detect integrity attacks on control systems. In Section IV, we summarize the attacker's capabilities and propose two attack models. In Section V, we analyze bounds on the attacker's performance. Section VI concludes the paper.

\section{System Model}
In this section, we introduce the model for our system. In particular, we assume our cyber-physical system can be modeled as a discrete time control system where
\begin{align}
x_{k+1} &= Ax_k + Bu_k + w_k, \label{eq:dynamics} \\
y_k &= Cx_k + v_k.  \label{eq:sensor}
\end{align}
Here $x_k \in \mathbb{R}^n$ is the state vector at time $k$ and $u_k \in \mathbb{R}^p$ is a collection of control inputs. A suite of sensors are used to monitor the state. Here $y_k \in \mathbb{R}^m$ is a vector of sensor measurements taken at time $k$. $w_k$ is the independent and identically distributed (IID) process noise with probability distribution given by $\mathcal{N}(0,Q)$ where $Q \succ 0$. Meanwhile, $v_k$ is the IID measurement noise with distribution given by $v_k \sim \mathcal{N}(0,R)$ where $R \succ 0$. We assume that $(A,C)$ is detectable. Additionally, $(A,B)$ and $(A,Q^{\frac{1}{2}})$ are assumed to be stabilizable. 

The set of measurements $y_k$ are sent to the SCADA center in order to compute the optimal control input. For our purposes, we assume that the operator wishes to minimize a quadratic function of the states and inputs as follows
\begin{equation}
J = \lim_{T \rightarrow \infty} \frac{1}{T+1} \mathbb{E} \left[ \sum_{k = 0}^T x_k^TWx_k + u_k^TUu_k \right], \label{eq:costfunction}
\end{equation}
where $W \in \mathbb{R}^{n\times n}, ~U \in \mathbb{R}^{p \times p}$ are positive definite matrices defining the relative cost of each state and input. The optimal control input for the given cost function is a combination of a Kalman filter and a linear state feedback controller \cite{Kumar1986}.

The Kalman filter computes the minimum mean squared error state estimate $\hat{x}_{k|k}^r$ \footnote{The superscript $r$ is used to distinguish the ordinary state estimate from the state estimate obtained through the moving target model.} given the previous set of measurements up to $y_k$ denoted by $y_{1:k}$. We assume that the system has been running for a long time so that the Kalman filter has converged to a fixed gain linear estimator.
\begin{align}
&\hat{x}_{k+1|k}^r = A\hat{x}_{k|k}^r + Bu_{k}, \\
& \hat{x}_{k|k}^r = (I-KC)\hat{x}_{k|k-1}^r + Ky_k, \label{eq:KalmanEst}   \\
&K = PC^T(CPC^T+R)^{-1},  \\
 &P= APA^T + Q - APC^T(CPC^T+R)^{-1}CPA^T.
 \end{align}

The optimal control input with respect to \eqref{eq:costfunction} is given by 
\begin{equation}
u_k^* = L\hat{x}_{k|k}^r, ~~~L = -(B^TSB+U)^{-1}B^TSA, \label{eq:OptInput}
\end{equation}
and $S$ satisfies the following Riccati equation
\begin{equation}
S = A^TSA + W - A^TSB(B^TSB+U)^{-1}B^TSA.
\end{equation}

A bad data detector can be utilized to determine whether a malicious attack is occurring. Typically, the bad data detector can be written as a threshold-based detector where
\begin{equation}
g_k(\mathcal{I}_k) \overset{\mathcal{H}_1}{\underset {\mathcal{H}_0}{\gtrless}} \eta_k.  \label{general detector}
\end{equation}
Here, $\mathcal{I}_k$ is the information available to the defender. The null hypothesis $\mathcal{H}_0$ is that the system is operating normally while the alternate hypothesis $\mathcal{H}_1$ is that the system is under attack. A more specific detector will be discussed later in the article. We furthermore define the probability of detection $\beta_k$ and false alarm $\alpha$ as
\begin{equation}
\beta_k = \mbox{Pr}\left(g_k\left(\mathcal{I}_k\right) > \eta_k | \mathcal{H}_1 \right),~\alpha =  \mbox{Pr}\left(g_k\left(\mathcal{I}_k\right) > \eta_k | \mathcal{H}_0 \right).
\end{equation} 
Observe that $\alpha$ is independent of $k$ since the system is stationary under $\mathcal{H}_0$. Regardless of the information available to a system operator, an attacker with knowledge of the input to output model as well as the ability to manipulate sensor measurements and control inputs, can generate undetectable attacks \cite{Smith2011}. 

For instance, an adversary can simply subtract the influence he inserts through the control inputs from the system outputs as follows
\begin{align}
x_{k+1} &= Ax_k + B(u_k^* + u_k^a) + w_k, \\
y_k &= Cx_k + v_k + s_k^a,
\end{align}
where $s_k^a$ is given by
\begin{align}
x_{k+1}^a &= Ax_k^a + Bu_k^a, \\
s_k^a &= -Cx_k^a.
\end{align}
In this case, the attack has zero net effect on the outputs and as a result $\beta_k = \alpha$.
\section{The Moving Target}
As discussed in the previous section, an adversary who is both aware of the system model and has access to all channels can generate undetectable attacks. In this work, we propose introducing linear time-varying dynamics, unknown to the adversary, but known to the defender, into the system. The defender can leverage his knowledge of the system to detect integrity attacks by the adversary. Moreover, by introducing time-varying dynamics, the defender limits the adversary's ability to identify the system using his access to measurements and inputs. The time-varying dynamics act as a moving target.
\subsection{Extended Model}
We extend the state $x_k$ to include extraneous states $\tilde x_k \in \mathbb{R}^{\tilde n}$ as follows
\begin{equation}
 \begin{bmatrix} \tilde x_{k+1} \\ x_{k+1} \end{bmatrix} =
  \mathcal{A}_k\begin{bmatrix} \tilde x_k \\ x_k \end{bmatrix}
 + \mathcal{B}_ku_k + \begin{bmatrix} \tilde w_k \\ w_k \end{bmatrix}  ,
 \end{equation} 
 where
 \begin{equation}
 \mathcal{A}_k\triangleq \begin{bmatrix} A_{1,k} & A_{2,k} \\0 & A \end{bmatrix}, ~~~ \mathcal{B}_k \triangleq \begin{bmatrix} B_k \\ B \end{bmatrix} .
 \end{equation}
Moreover, we introduce additional sensors $\tilde y_k \in \mathbb{R}^{\tilde m}$ to measure the extraneous states.
  \begin{equation}
 \begin{bmatrix} \tilde{y}_k \\ y_k \end{bmatrix} = \mathcal{C}_k \begin{bmatrix} \tilde x_k \\ x_k \end{bmatrix} + \begin{bmatrix} \tilde v_k \\ v_k \end{bmatrix},~~ \mathcal{C}_k \triangleq \begin{bmatrix} C_k & 0 \\0 & C \end{bmatrix}.
 \end{equation}
 The matrices are assumed to be IID random variables which are independent of the sensor and process noise with distribution 
 \begin{equation}
 A_{1,k}, A_{2,k}, B_k, C_{k+1} \sim
 f_{A_{1,k},A_{2,k},B_k,C_{k+1}}(A_1,A_2,B,C).
 \end{equation}
 Furthermore, we also assume that 
 \begin{equation}
\begin{bmatrix} \tilde w_k \\ w_k \end{bmatrix}   \sim \mathcal{N}\left(0,\mathcal{Q}\right), ~~
\begin{bmatrix} \tilde v_k \\ v_k \end{bmatrix} \sim \mathcal{N}\left(0, \mathcal{R} \right),
 \end{equation}
 where 
 \begin{equation}
 \mathcal{Q} = \begin{bmatrix} \tilde Q & \tilde Q_{12} \\ \tilde Q_{12}^T & Q \end{bmatrix} \succ 0,~~~
 \mathcal{R} =  \begin{bmatrix} \tilde R & \tilde R_{12} \\ \tilde R_{12}^T & R \end{bmatrix} \succ 0.
 \end{equation}
 \begin{remark}
 While we assume the structure of the system introduced above with IID matrices $ A_{1,k}, A_{2,k}, B_k, C_{k+1}$, the moving target design can still be effective in other scenarios. For instance, the dynamics need not be linear as long as the defender can accurately model the system. Moreover, the system parameters do not have to evolve at each time step, though the longer the target remains in place, the easier it is for the adversary to identify the system. In addition, the matrices $A_{1,k}$, $A_{2,k},$ or $B_k$ may be sparse, as long as there exists adequate coupling between $x_k$ and $\tilde{x}_k$.
 \end{remark}
 \begin{remark}
 The defender must be able to introduce extraneous states with time-varying dynamics correlated to the original state of the system. The extraneous states are application dependent and are to be decided by the system operator. Nonetheless, the system operator can leverage existing waste products of the system, for instance the heat dissipated by a reaction or process. The dynamics can be made time-varying by changing conditions at the plant. Alternatively, the defender can introduce dynamics into the system. For instance, the defender can introduce RLC circuits which measure the states. Time varying dynamics can be incorporated by including variable resistors or capacitors. By varying the components of the circuit according to some IID distribution at each time step, the defender can generate IID system matrices.
 \end{remark}
\begin{remark}
 In the above formulation we assume that the defender is aware of the real time system matrices although they are random. In general, this information should not be sent over the network since doing so amounts to the existence of a secure communication channel. The secure communication channel could be leveraged to detect an attack without considering a moving target approach, for instance through physical watermarking \cite{Weerakkody2014}. Alternatively, we can generate pseudo random system matrices using a pseudo random number generator (PRNG). In this case, the seed of the PRNG will be known to the defender and kept hidden from the attacker.
\end{remark}

\subsection{Estimation and Detection}
The presence of additional sensors allows us to improve our estimate of the state. In particular, we can incorporate an additional Kalman filter to estimate the state as follows.
\begin{align}
\begin{bmatrix} \hat{\tilde{x}}_{k+1|k} \\ \hat{x}_{k+1|k} \end{bmatrix} 
&= \mathcal{A}_k\left(\left(I-\mathcal{K}_kC_k\right)\begin{bmatrix} \hat{\tilde{x}}_{k|k-1} \\ \hat{x}_{k|k-1} \end{bmatrix}  + \mathcal{K}_k\begin{bmatrix} \tilde y_k \\ y_k \end{bmatrix} \right) \nonumber
\\ &+ \mathcal{B}_kL\hat{x}_{k|k}^r,  \label{KalmanBig} \\
\mathcal K_k &= \mathcal{P}_k\mathcal{C}_k^T\left(\mathcal{C}_k\mathcal{P}_k\mathcal{C}_k^T+\mathcal{R}\right)^{-1}, \\
\mathcal{P}_{k+1} &= \mathcal{A}_k\left(\mathcal{P}_k - \mathcal{K}_k\mathcal{C}_k\mathcal{P}_k \right)\mathcal{A}_k^T + \mathcal{Q}.
\end{align}
Observe that we use the state estimate $\hat{x}_{k|k}^r$ to compute the input $u_k^*$ as opposed to an estimate derived from \eqref{KalmanBig}. We assume the defender does not care about controlling $\tilde{x}_k$. In this case, adding the moving target does not change $J$.  Such a strategy also prevents the attacker from using information from the input to learn about the system model. In fact, we have the following result.
\begin{theorem}
The input $u_k^* = L\hat{x}_{k|k}^r$ is independent from the system matrices $A_{1,k-1},A_{2,k-1},B_{k-1},C_k$ for all $k$.
\end{theorem}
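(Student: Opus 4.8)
The plan is to show that the ordinary state estimate $\hat{x}^r_{k|k}$, and hence the input $u_k^* = L\hat{x}^r_{k|k}$, is a measurable function only of the randomness driving the \emph{original} plant \eqref{eq:dynamics}--\eqref{eq:sensor} --- the initial data, the process noise $w_{0:k-1}$, and the measurement noise $v_{1:k}$ --- none of which involves the extraneous states $\tilde x_k$, the extra measurements $\tilde y_k$, or the moving-target matrices. Since by the modeling assumptions $A_{1,k-1},A_{2,k-1},B_{k-1},C_k$ are independent of the process and measurement noise (and of the initial data), and since a measurable function of a random vector independent of $Z$ is itself independent of $Z$, independence of $u_k^*$ from $A_{1,k-1},A_{2,k-1},B_{k-1},C_k$ follows.

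First I would establish, by induction on $k$, that $(x_k, y_k, \hat{x}^r_{k|k})$ is a measurable function of $(x_0,\hat{x}^r_{1|0}, w_{0:k-1}, v_{1:k})$ under nominal operation. The key is that in the absence of attacks $u_k = u_k^* = L\hat{x}^r_{k|k}$, so the closed loop reads
\[
\hat{x}^r_{k+1|k} = (A+BL)\hat{x}^r_{k|k},\quad x_{k+1} = Ax_k + BL\hat{x}^r_{k|k} + w_k,\quad y_{k+1} = Cx_{k+1}+v_{k+1},
\]
together with $\hat{x}^r_{k+1|k+1} = (I-KC)\hat{x}^r_{k+1|k} + Ky_{k+1}$ from \eqref{eq:KalmanEst}. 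Each right-hand side is a fixed linear map of quantities that, by the inductive hypothesis, are already expressed through $(x_0,\hat{x}^r_{1|0}, w_{0:k}, v_{1:k+1})$; the base case is immediate. Crucially, the ordinary estimator/controller pair \eqref{eq:KalmanEst}, \eqref{eq:OptInput} by construction never reads the filter \eqref{KalmanBig}, the sensors $\tilde y_k$, or $\mathcal{A}_k,\mathcal{C}_k$, so the moving-target subsystem and the correlated noises $\tilde w_k,\tilde v_k$ never enter $\hat{x}^r_{k|k}$. (If one prefers the converged, doubly-infinite filter, the same argument expresses $\hat{x}^r_{k|k}$ as a measurable function of the full tails $\{w_j\}_{j\le k-1}$, $\{v_j\}_{j\le k}$.)

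It then remains to observe that the random vector $(x_0,\hat{x}^r_{1|0}, w_{0:k-1}, v_{1:k})$ is independent of $(A_{1,k-1},A_{2,k-1},B_{k-1},C_k)$: the moving-target matrices are assumed IID and independent of $\{w_j\},\{v_j\}$, and the initial data is independent of them (deterministic, in the simplest case). Hence $u_k^* = L\hat{x}^r_{k|k}$, being a measurable function of $(x_0,\hat{x}^r_{1|0}, w_{0:k-1}, v_{1:k})$, is independent of $(A_{1,k-1},A_{2,k-1},B_{k-1},C_k)$; the same reasoning in fact gives independence from the entire family of moving-target matrices.

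I expect the only real care to be in the inductive step: one must unroll the feedback $u_k = L\hat{x}^r_{k|k}$ so that $\hat{x}^r_{k|k}$ and $x_k$ are written purely in terms of noise variables, and verify that the ordinary loop is genuinely decoupled from the extended block and its noises rather than merely appearing so. The remaining ingredients --- that measurable functions preserve independence, and that the initial data is independent of the matrices --- are routine, though the latter is worth flagging as a standing assumption whenever $x_0$ is modeled as random.
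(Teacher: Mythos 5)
Your proposal is correct and follows essentially the same route as the paper: the paper's proof simply asserts that $u_k^*$ is a deterministic function $l(\hat{x}_{0|0}^r,x_0,A,B,C,K,L,w_{0:k-1},v_{1:k})$ of quantities assumed independent of $A_{1,k-1},A_{2,k-1},B_{k-1},C_k$, which is exactly the functional dependence you establish (more carefully, via induction on the closed loop) before invoking preservation of independence under measurable maps. Your version merely spells out the unrolling of the feedback and the decoupling of the ordinary loop from the moving-target block, which the paper leaves implicit.
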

\begin{proof}
The input $u_k^*$ is given by
\begin{equation}
l(\hat{x}_{0|0}^r,x_0,A,B,C,K,L,w_{0} \ldots w_{k-1},v_{1} \ldots v_{k}),
\end{equation}
where $l$ is some deterministic function of variables which by assumption are independent from $A_{1,k-1},A_{2,k-1},B_{k-1},C_k$ for all $k$. The result immediately follows.
\end{proof}
A similar result can be obtained under attack where $u_k^*$ is conditionally independent of the system matrices $A_{1,k-1},A_{2,k-1},B_{k-1},C_k$ for all $k$, given the adversary's attack inputs.

We assume that a residue based detector is incorporated where the residue $z_k$ is given by
\begin{equation}
z_k \triangleq \begin{bmatrix} \tilde y_k \\ y_k \end{bmatrix} - \mathcal{C}_k\begin{bmatrix} \hat{\tilde{x}}_{k|k-1} \\ \hat{x}_{k|k-1} \end{bmatrix} \sim \mathcal{N}\left(0,\mathcal{C}_k\mathcal{P}_k\mathcal{C}_k^T + \mathcal{R} \right).
\end{equation}
We can leverage knowledge of the distribution of $z_k$ under normal operation to design a detector. In particular we consider a $\chi^2$ detector where $g_k$ in \eqref{general detector} is given by
\begin{equation}
g_k(z_k) = z_k^T(\bar{\mathcal{P}}_k)^{-1}z_k, \label{eq:gk}
\end{equation}
where $\bar{\mathcal{P}}_k = \mathcal{C}_k\mathcal{P}_k\mathcal{C}_k+\mathcal{R}$. Under normal operation $g_k$ has a $\chi^2$ distribution. In general, the window for the detector can be extended to consider past measurements. In Figure 1, we include a diagram of the moving target system operating normally.
\begin{figure}[htb]
\includegraphics[height=55 mm]{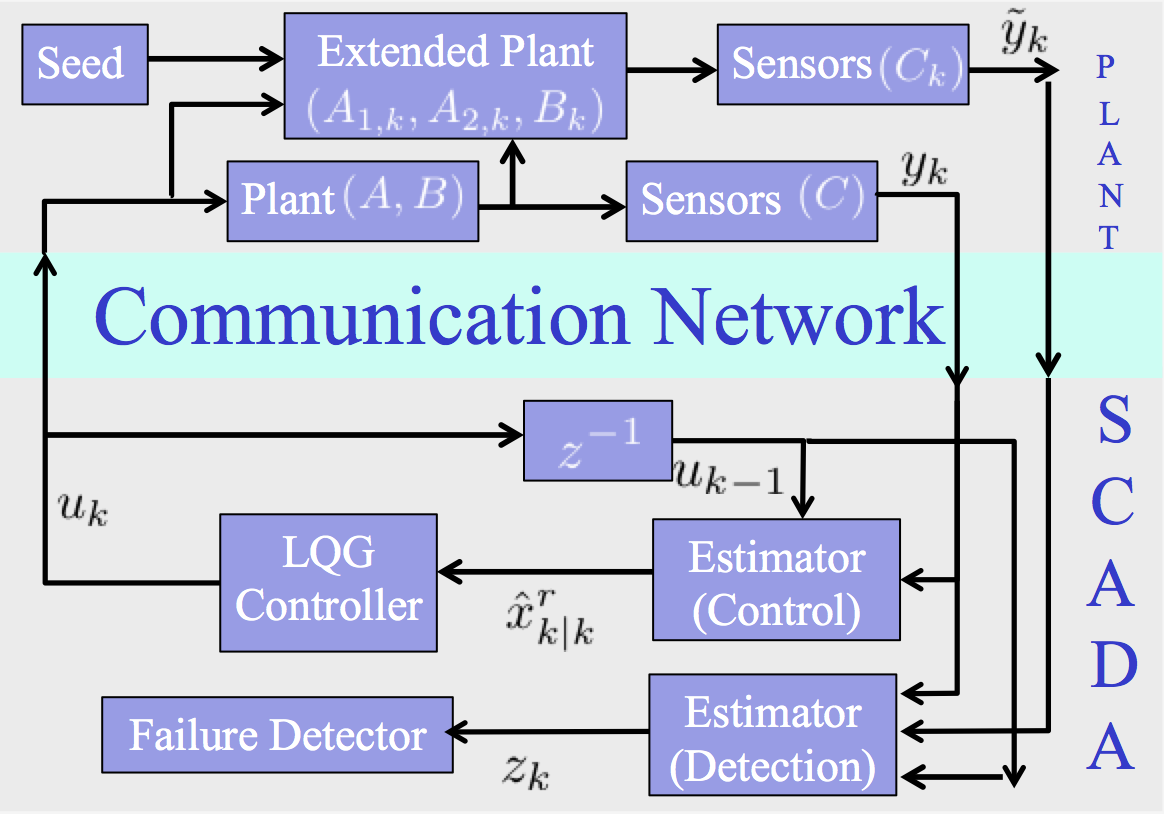}
\caption{Diagram of system under normal operation}
\end{figure} 
 \section{Attack Model}
 In this section we describe a near omnipotent attacker in terms of his capabilities, access to information, and potential strategies. On one hand, the adversary may acquire his knowledge and resources through a highly sophisticated attack strategy as done in Stuxnet.  On the other hand, an adversary can obtain his resources through insider information and access as done in the Maroochy Shire incident.
 \subsection{Attack Capabilities}
1) The attacker can insert arbitrary inputs into the system and can arbitrarily alter the sensor measurements. As a result, 
 when under attack, the system has dynamics given by 
 \begin{equation}
 \begin{bmatrix} \tilde x_{k+1} \\ x_{k+1} \end{bmatrix}  = \mathcal{A}_k\begin{bmatrix} \tilde x_k \\ x_k \end{bmatrix}
 + \mathcal{B}_k(u_k+u_k^a) + \begin{bmatrix} \tilde w_k \\ w_k \end{bmatrix}, 
 \end{equation} 
 \begin{equation}
  \begin{bmatrix} \tilde y_k^a \\  y_k^a \end{bmatrix} =  \begin{bmatrix} \tilde{y}_k \\ y_k \end{bmatrix} + \begin{bmatrix} \tilde s_k^a \\  s_k^a \end{bmatrix}.
 \end{equation}
 where $u_k^a$ is the attacker's control input and $\tilde s_k^a$ and $s_k^a$ are the biases injected on the extraneous sensors and ordinary sensors respectively.
 
2) The attacker can read the true outputs of the system $\tilde y_k, y_k$ and the inputs being sent by the defender to the plant $u_k$ for all time $k$.
 \begin{remark} The attacker essentially performs a man in the middle attack between the plant and system operator so that he can manipulate and read all communication channels arbitrarily. A malicious insider can do this by breaking encryption schemes. Furthermore, physical attacks can be used to change sensor measurements and control inputs. For instance, locally heating or cooling a temperature sensor would change the sensor measurements without violating the integrity or authenticity of data from a cyber perspective.
 \end{remark}
 
3) The attacker has full knowledge of the system model $\mathcal{S} \triangleq  \{A,B,C,K,L,\mathcal{Q},\mathcal{R}\}$. Moreover, the adversary knows the probability density function (pdf) of random matrices $A_{1,k},A_{2,k},B_k,C_{k+1}$.
 \begin{remark} While conservative, the adversary can obtain his knowledge of the system model by observing the communication channels for an extended period of time and performing system identification. Moreover, observe that since the attacker is aware of the original system model and all outputs, he can asymptotically predict the state estimate $\hat{x}_{k|k}^r$ if the matrix $(A+BL)(I-KC)$ is stable \cite{Chabukswar2013}.
 \end{remark}
 \begin{remark}
 The attacker can leverage his probabilistic knowledge of the system model as well as the true outputs of the system to generate stealthy attack inputs $s_k^a, \tilde s_k^a$. In particular, the adversary can attempt to simultaneously identify the moving target and generate convincing counterfeit sensor outputs.
 \end{remark}

Based on the above definitions we can define the private information available to the attacker and defender at time $k$ $\mathcal{I}_k^{A},\mathcal{I}_k^{D}$ and the public information $\mathcal{I}_k^{P}$ available to both as
\begin{align}
\mathcal{I}_k^{A} &\triangleq \{ \tilde y_j, y_j, \tilde s_{j-1}^a, s_{j-1}^a, u_{j-1}^a \} ~~~ \forall~ j \le k, \\
\mathcal{I}_k^{D} &\triangleq \{A_{1,j-1}, A_{2,j-1}, B_{j-1}, C_{j} \} ~~~ \forall~ j, \\
\mathcal{I}_k^{P} &\triangleq \{\mathcal{S},f(A_1,A_2,B,C), u_{j-1}, \tilde y_{j-1}^a, y_{j-1}^a\} ~~~\forall~ j \le k.
\end{align}
In Figure 2, we include a diagram of the system under attack.
\begin{figure}[htb]
\includegraphics[height=55 mm]{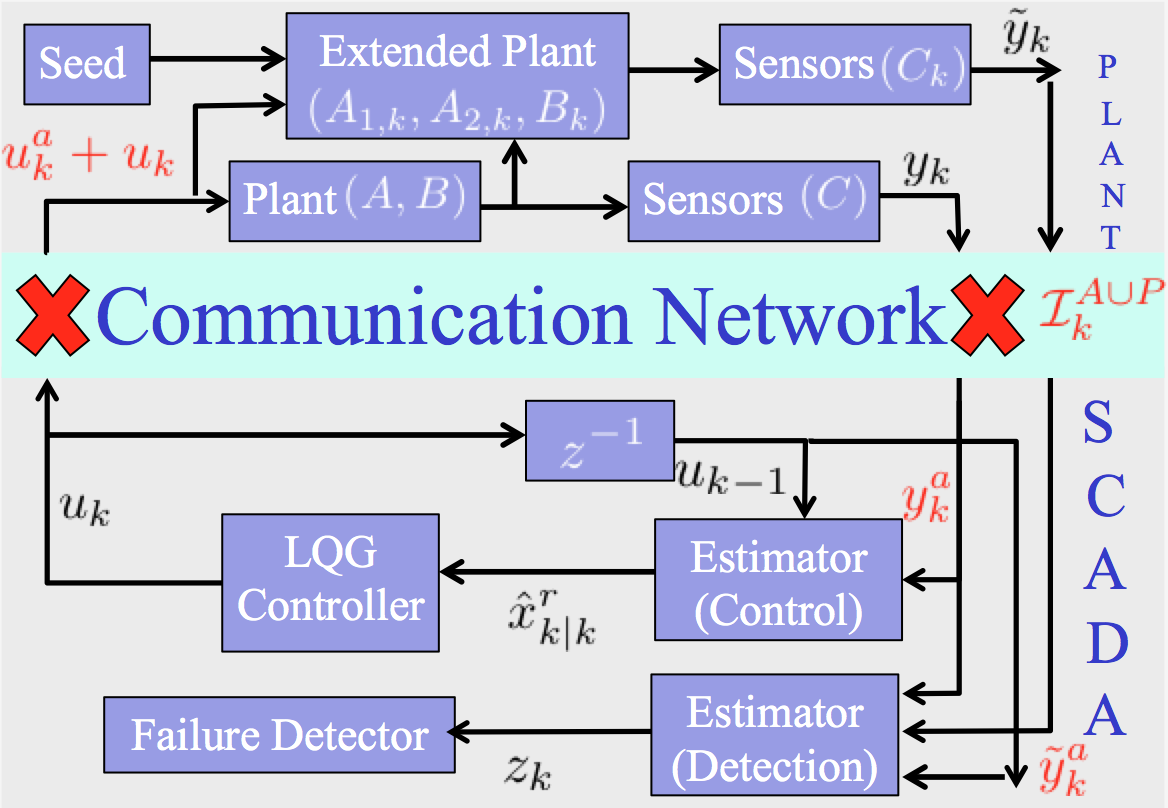}
\caption{Diagram of system under attack}
\end{figure} 
\subsection{Attack Strategy}
In this subsection we propose two main attack strategies. Without loss of generality we assume any attack begins at $k = 0$.
\subsubsection{Attack 1: Subtract Influence}
In the first attack strategy the attacker aims to estimate his influence on the control system and subtract it.
Define $ \bar{s}_k^a \triangleq [\tilde{s}_k^{a~T} ~  s_k^{a~T}]^T $. Observe that if 
\begin{equation}
\bar{x}_{k+1}^a = \mathcal{A}_k\bar{x}_k^a + \mathcal{B}_ku_k^a,~~ \Delta \bar{y}_k^a = \mathcal{C}_k\bar{x}_k^a,
\end{equation}
with initial state $\bar{x}_0^a = 0$ and $\bar{s}_k^a = -\Delta \bar{y}_k^a$, an attack is completely stealthy. As the adversary does not know the time varying matrices, we assume he computes an estimate of $\Delta \bar{y}_k^a$ and uses that to subtract his influence on the sensor measurements. Thus, we would have
\begin{equation}
\bar{s}_k^a = -\mathbb{E}[\Delta \bar y_k^a|\mathcal{I}_k^A \cup \mathcal{I}_k^P  ].
\end{equation}
\begin{remark}
Observe that the adversary can exactly subtract his influence from measurements $y_k$ due to his knowledge of the system model. However, the adversary should be unable to completely subtract his bias from the extraneous sensors $\tilde y_k$.
\end{remark}
\vspace{0.3cm}
\textit{Optimal Theoretical Estimation}
Define $\bar y_k^a \triangleq  [\tilde y_k^{aT} ~  y_k^{aT}]^T,~\bar{x}_k \triangleq [\tilde x_k^T ~  x_k^T]^T,~~\bar{w}_k \triangleq  [\tilde w_k^T ~  w_k^T]^T$, $\bar{v}_k \triangleq  [\tilde v_k^T ~  v_k^T]^T$, and $ \bar{y}_k \triangleq [\tilde y_k^T ~  y_k^T]^T$. The adversary's observations can be formulated through the following linear time-varying system,
\begin{align}
\begin{bmatrix} \bar{x}_{k+1}  \\ \bar{x}_{k+1}^a \end{bmatrix} &= \begin{bmatrix} \mathcal{A}_k & 0 \\ 0  & \mathcal{A}_k \end{bmatrix}\begin{bmatrix} \bar{x}_{k}  \\ \bar{x}_{k}^a \end{bmatrix} + \begin{bmatrix} \mathcal{B}_k  & \mathcal{B}_k \\ 0 & \mathcal{B}_k  \end{bmatrix}\begin{bmatrix} u_k \\ u_k^a \end{bmatrix} + \begin{bmatrix} \bar{w}_k \\ 0 \end{bmatrix},
\\ \bar y_k &= \begin{bmatrix} \mathcal{C}_k & 0 \end{bmatrix}\begin{bmatrix} \bar{x}_{k}  \\ \bar{x}_{k}^a \end{bmatrix}  + \bar{v}_k.
\end{align}
To estimate $\Delta \bar{y}_k^a$ at time $k$, assume the adversary has access to the following distribution $f(\bar{x}_k,\bar{x}_k^a,C_k|\mathcal{I}_k^{A \cup P})$ where $\mathcal{I}_k^{A \cup P} = \mathcal{I}_k^A \cup \mathcal{I}_k^P$  
Then we have 
\begin{equation}
\bar{s}_k^a = -\int_{\bar{x}_k}\int_{\bar{x}_k^a}\int_{C_k} \mathcal{C}_k\bar{x}_k^a f(\bar{x}_k,\bar{x}_k^a,C_k|\mathcal{I}_k^{A \cup P}) \mbox{d}\bar{x}_k\mbox{d}\bar{x}_k^a\mbox{d}C_k. \label{expectedvalues}
\end{equation}
We show that the pdf can be recursively computed at each step. Letting $\zeta_{k+1} = \{\bar{x}_{k+1},\bar{x}_{k+1}^a,C_{k+1}\}$ we have 
\begin{align}
f(\zeta_{k+1}|&\mathcal{I}_{k+1}^{A \cup P})   = f(\zeta_{k+1}|\mathcal{I}_{k}^{A \cup P} , \bar{y}_{k}^a, \bar{y}_{k+1}, \bar{s}_{k}^a, u_{k}^a,u_k), \nonumber \\
&= f(\zeta_{k+1}|\mathcal{I}_{k}^{A \cup P} , \bar{y}_{k+1}, u_{k}^a,u_k), \nonumber \\ 
&= \frac{f(\bar{y}_{k+1}| \mathcal{I}_{k}^{A \cup P},\zeta_{k+1})f(\zeta_{k+1}|\mathcal{I}_{k}^{A \cup P},u_k,u_k^a)}{ f(\bar{y}_{k+1}|\mathcal{I}_{k}^{A \cup P},u_k,u_k^a)}. \label{pdfupdates}
\end{align}
The second equality follows from the conditional independence of $\zeta_{k+1}$ and $\bar{y}_{k}^a,\bar{s}_{k}^a$ given $\bar{y}_{k}$ and $u_k$. The last equality follows from Bayes rule and the conditional independence of $\bar{y}_{k+1}$ and $u_k, u_k^a$ given $\zeta_{k+1}$. We note that this distribution can be theoretically computed given the attacker's information.
That is, we know that
\begin{equation}
f(\bar{y}_{k+1}| \mathcal{I}_{k}^{A \cup P},\zeta_{k+1}) \sim \mathcal{N}\left(\mathcal{C}_{k+1}\bar{x}_{k+1}, \mathcal{R} \right).
\end{equation}
Moreover, $\zeta_{k+1}$ and $\bar{y}_{k+1}$ are deterministic functions of $\zeta_k$, $u_k$, $u_k^a$ and random variables $A_{1,k}$, $A_{2,k}$, $B_k$, $C_{k+1}$, $\bar{w}_k$, $\bar{v}_{k+1}$ which are independent of $\zeta_k$ given $\mathcal{I}_{k}^{A \cup P}$. Thus, theoretically, $f(\zeta_{k+1}|\mathcal{I}_{k+1}^{A \cup P})$ can be recursively computed from $f(\zeta_k|\mathcal{I}_{k}^{A \cup P})$.
\begin{remark}
If the attacker subtracts his influence, he might be susceptible to a growing cancellation error if he attempts to excite the system's unstable dynamics. Instead of subtracting his influence the attacker can instead directly estimate what the defender expects to see as summarized in the next section.
\end{remark}
\subsubsection{Attack 2: Estimate Expected Measurement}
In the next strategy, the adversary aims to track the system operator's state estimate. Using the system operator's state estimate, the adversary attempts to generate stealthy outputs. Let $\hat{\bar{x}}_k = [\hat{\tilde{x}}_{k|k-1}^T \hat{x}_{k|k-1}^T]^T$. The attacker's observations and strategy can be formulated as follows
\begin{align}
\begin{bmatrix} \bar{x}_{k+1}  \\ \hat{\bar{x}}_{k+1} \end{bmatrix} &= \begin{bmatrix} \mathcal{A}_k & 0 \\ 0 & \mathcal{A}_k(I-\mathcal{K}_k\mathcal{C}_k) \end{bmatrix}\begin{bmatrix} \bar{x}_{k}  \\ \hat{\bar{x}}_{k} \end{bmatrix} + \begin{bmatrix} \bar{w}_k \\ 0 \end{bmatrix}, \nonumber \\ &+   \begin{bmatrix} \mathcal{B}_k  & \mathcal{B}_k & 0 \\ \mathcal{B}_k & 0 & \mathcal{A}_k\mathcal{K}_k \end{bmatrix}\begin{bmatrix} u_k \\ u_k^a \\ \bar{y}_k^a \end{bmatrix},
\end{align}
\begin{equation}
\bar{y}_k = \begin{bmatrix} \mathcal{C}_k & 0 \end{bmatrix}\begin{bmatrix} \bar{x}_{k}  \\ \hat{\bar{x}}_{k} \end{bmatrix} + \bar{v}_k, ~~~~\bar{s}_k^a = \mathbb{E}[\mathcal{C}_k\hat{\bar{x}}_{k}|\mathcal{I}_{k}^{A \cup P}]-\bar{y}_k.
\end{equation}
The attacker wishes to track $\zeta_{k} = \{\bar{x}_{k},\hat{\bar{x}}_{k},C_{k},\mathcal{P}_{k}\}$. The use of the preceding attack design is motivated by the ensuing result which states that the chosen attack vector minimizes a fixed quadratic function of the measurement residues.
\begin{theorem}
Let $\Sigma \succeq 0$ be a positive semidefinite matrix.
\begin{equation}
\mathbb{E}[\mathcal{C}_k\hat{\bar{x}}_{k}|\mathcal{I}_{k}^{A \cup P}]-\bar{y}_k = \arg \underset{\bar{s}_k^a}{\min} ~\mathbb{E}[z_k^T\Sigma z_k|\mathcal{I}_{k}^{A \cup P}].
\end{equation}
\begin{proof}
 Observe that
\begin{equation}
\mathbb{E}[z_k^T \Sigma z_k|\mathcal{I}_{k}^{A \cup P}] = \int_{\zeta_k} z_k^T \Sigma z_k f(\zeta_k|\mathcal{I}_{k}^{A \cup P})\mbox{d}\zeta_k.
\end{equation}
Taking the gradient with respect to $\bar{s}_k^a$ and setting the resulting expression equal to 0, we obtain
\begin{equation}
\int_{\zeta_k} \Sigma (\bar{y}_k + \bar{s}_k^a - \mathcal{C}_k\hat{\bar{x}}_k) f(\zeta_k|\mathcal{I}_{k}^{A \cup P})\mbox{d}\zeta_k = 0.
\end{equation}
Solving gives
\begin{equation}
\bar{s}_k^a = -\bar{y}_k + \int_{\zeta_k}\mathcal{C}_k\hat{\bar{x}}_k f(\zeta_k|\mathcal{I}_{k}^{A \cup P})\mbox{d}\zeta_k,
\end{equation}
and the result holds.
\end{proof}
\end{theorem}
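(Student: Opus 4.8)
The plan is to treat this as a routine finite-dimensional convex quadratic minimization over the deterministic vector $\bar{s}_k^a$, with the conditional expectation over the attacker's posterior on $\zeta_k = \{\bar{x}_k,\hat{\bar{x}}_k,C_k,\mathcal{P}_k\}$ acting as a fixed weighting. First I would recall that, under attack, the defender forms the residue using the predicted estimate $\hat{\bar{x}}_k = [\hat{\tilde{x}}_{k|k-1}^T~\hat{x}_{k|k-1}^T]^T$, so that $z_k = \bar{y}_k^a - \mathcal{C}_k\hat{\bar{x}}_k = \bar{y}_k + \bar{s}_k^a - \mathcal{C}_k\hat{\bar{x}}_k$, since $\bar{y}_k^a = \bar{y}_k + \bar{s}_k^a$. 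The crucial observation is that both $\bar{y}_k$ (the true outputs, which belong to $\mathcal{I}_k^A$) and $\bar{s}_k^a$ (the attacker's own choice, a measurable function of $\mathcal{I}_k^{A\cup P}$) are constants with respect to the conditional expectation, whereas $\mathcal{C}_k$ and $\hat{\bar{x}}_k$ are encoded in $\zeta_k$ and remain random given the conditioning. Writing the cost as $J_k(\bar{s}_k^a) = \int_{\zeta_k} z_k^T\Sigma z_k\, f(\zeta_k|\mathcal{I}_k^{A\cup P})\,\mbox{d}\zeta_k$ and expanding the integrand as a quadratic in $\bar{s}_k^a$ shows that $J_k$ is a convex quadratic in $\bar{s}_k^a$ with constant Hessian $2\Sigma \succeq 0$.

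Next I would compute the first-order condition. Differentiating under the integral sign is justified because $z_k^T\Sigma z_k$ is a degree-two polynomial in $\zeta_k$, the posterior $f(\zeta_k|\mathcal{I}_k^{A\cup P})$ has finite second moments (it is obtained by conditioning jointly Gaussian noise together with the random-but-known moving-target matrices, and can be propagated recursively in the same spirit as \eqref{pdfupdates}), and $\bar{s}_k^a$ enters only through affine and quadratic terms, so an integrable dominating function exists on a neighborhood of any $\bar{s}_k^a$. This yields $\nabla_{\bar{s}_k^a} J_k = 2\int_{\zeta_k}\Sigma(\bar{y}_k + \bar{s}_k^a - \mathcal{C}_k\hat{\bar{x}}_k)\, f(\zeta_k|\mathcal{I}_k^{A\cup P})\,\mbox{d}\zeta_k$. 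Pulling the constants out of the integral and setting the gradient to zero gives the stationarity equation $\Sigma\bigl(\bar{y}_k + \bar{s}_k^a - \mathbb{E}[\mathcal{C}_k\hat{\bar{x}}_k|\mathcal{I}_k^{A\cup P}]\bigr) = 0$, which is solved by $\bar{s}_k^a = \mathbb{E}[\mathcal{C}_k\hat{\bar{x}}_k|\mathcal{I}_k^{A\cup P}] - \bar{y}_k$. Since $J_k$ is convex in $\bar{s}_k^a$, every stationary point is a global minimizer, so this choice attains the minimum, which establishes the claim; I would also remark that $\mathbb{E}[\mathcal{C}_k\hat{\bar{x}}_k|\mathcal{I}_k^{A\cup P}] = \int_{\zeta_k}\mathcal{C}_k\hat{\bar{x}}_k\, f(\zeta_k|\mathcal{I}_k^{A\cup P})\,\mbox{d}\zeta_k$ is exactly the quantity the adversary can track, which is the operational point of the theorem.

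The thing to be careful about — more a matter of precise phrasing than a genuine obstacle — is uniqueness of the minimizer. When $\Sigma \succ 0$ the stationarity equation forces $\bar{s}_k^a$ to equal the displayed vector exactly, so $\arg\min$ is a singleton; when $\Sigma$ is only positive semidefinite, every point of the affine set $\mathbb{E}[\mathcal{C}_k\hat{\bar{x}}_k|\mathcal{I}_k^{A\cup P}] - \bar{y}_k + \ker\Sigma$ is likewise a minimizer, and the statement should be read as asserting that the exhibited vector is one such minimizer. A secondary point to make explicit is the mild regularity assumption that $\mathcal{C}_k\hat{\bar{x}}_k$ has finite conditional second moment under the attacker's posterior, which is what legitimizes both the differentiation under the integral and the existence of the expectation appearing in the minimizer.
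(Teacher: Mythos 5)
Your proof is correct and follows essentially the same route as the paper's: write the conditional expectation as an integral against $f(\zeta_k|\mathcal{I}_k^{A\cup P})$, note that $\bar{y}_k$ and $\bar{s}_k^a$ are constants under the conditioning, take the gradient with respect to $\bar{s}_k^a$, set it to zero, and solve. Your additions — convexity to certify the stationary point is a global minimizer, justification of differentiation under the integral, and the remark that for merely positive semidefinite $\Sigma$ the minimizer is unique only modulo $\ker\Sigma$ — are refinements of points the paper leaves implicit, not a different argument.
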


To determine $\bar{s}_k^a$ at time $k$ assume the adversary has access to the following distribution $f(\zeta_k|\mathcal{I}_k^{A \cup P})$. As done before, the attacker can theoretically compute $\bar{s}_k^a$  by taking a conditional expectation.  Additionally, similar to \eqref{pdfupdates} we have
\begin{align}
&f(\zeta_{k+1}|\mathcal{I}_{k+1}^{A \cup P})  \nonumber \\
&= \frac{f(\bar{y}_{k+1}| \mathcal{I}_{k}^{A \cup P},\zeta_{k+1})f(\zeta_{k+1}|\mathcal{I}_{k}^{A \cup P},u_k,u_k^a,\bar{y}_k^a)}{ f(\bar{y}_{k+1}|\mathcal{I}_{k}^{A \cup P},u_k,u_k^a,\bar{y}_k^a)}. 
\end{align}

 Moreover, by similar analysis as in attack 1, we can demonstrate that $f(\zeta_{k+1}|\mathcal{I}_{k+1}^{A \cup P})$ can be recursively computed from $f(\zeta_k|\mathcal{I}_{k}^{A \cup P})$. The main difference here is that the adversary must also estimate $\mathcal{P}_k$. Note that in practice the proposed attacks are difficult to execute for an adversary since it is likely a challenge to compute the necessary distribution functions and expected values. As a result, in the next section we aim to provide bounds on the attacker's estimation performance in terms of mean square error matrices.

\section{Bounds on Attacker's Performance}
\subsection{Bounds on Attacker's State Estimation}
In this section we attempt to characterize lower bounds on the error matrices associated with the states $\zeta_k$ defined in attack strategy 1 and 2. From there, we can attempt to characterize how well the adversary can design $\bar{s}_k^a$ to fool the bad data detector.

We leverage conditional posterior Cramer-Rao lower bounds for Bayesian sequences  derived by \cite{Zuo2011}. 
The authors here make use of the Bayesian Cramer-Rao lower bound or Van Trees bound derived in \cite{VanTrees1968} which states that for observations $y$ and states $\zeta$ the mean squared error matrix is bounded by the Fisher information as follows
\begin{equation}
\mathbb{E}_{f(\zeta,y)}\left[[\hat\zeta(y)-\zeta][\hat\zeta(y)-\zeta]^T\right] \ge I^{-1},
\end{equation}
where the Fisher information matrix $I$ is given by
\begin{equation}
I = \mathbb{E}_{f(\zeta,y)} \left[-\triangle_{\zeta}^{\zeta} \mbox{ log} f(\zeta,y) \right].
\end{equation}
Note that
\begin{equation*}
\triangle_x^y g(x,y) \triangleq \triangledown_x \triangledown_y^T g(x,y).
\end{equation*}
In \cite{Zuo2011}, this result is extended to nonlinear Bayesian sequences with dynamics given by
\begin{align}
\zeta_{k+1} = F_k(\zeta_k,\omega_k), ~~ \bar y_k = G_k(\zeta_k,\bar{v}_k),
\end{align}
where $\omega_k$ and $\bar{v}_k$ are independent process and sensor noise respectively. In our case, we slightly adapt these results to account for the fact there is feedback in our system so that 
\begin{equation}
\zeta_{k+1} = F_k(\zeta_k,\bar{y}_{1:k},\omega_k),~~~ \bar y_k = G_k(\zeta_k,\bar{v}_k).
\end{equation}
The inputs $u_k$, $u_k^a$ and $\bar{s}_k^a$ are incorporated into the definition of $F_k$, while uncertainty in the model $(A_{1,k},A_{2,k},B_k,C_{k+1})$ can be incorporated in the process noise $\omega_k$. It can shown that the following posterior Cramer-Rao lower bound holds
\begin{equation}
\mathbb{E}_{f_{k+1}^c}\left[e_{0:k+1}e_{0:k+1}^T|\bar{y}_{1:k}\right] \ge I^{-1}(\zeta_{0:k+1}|\bar{y}_{1:k}), \label{PosteriorCramer}
\end{equation}
where
\begin{align}
e_{0:k+1} &\triangleq \zeta_{0:k+1} - \hat{\zeta}_{0:k+1}(\bar{y}_{k+1}|\bar{y}_{1:k}), \\
f_{k+1}^c &\triangleq f(\zeta_{0:k+1},\bar{y}_{k+1}|\bar{y}_{1:k}),\\
I(\zeta_{0:k+1}|\bar{y}_{1:k}) &\triangleq \mathbb{E}_{f_{k+1}^c}\left[-\triangle_{\zeta_{0:k+1}}^{\zeta_{0:k+1}} \mbox{log } f_{k+1}^c | \bar{y}_{1:k} \right].
\end{align}
\begin{remark} \label{fk1c}
We remark that since $F_k$ is defined by inputs $u_k$, $u_k^a$ and $\bar{s}_k^a$, $f_{k+1}^c$ is implicitly conditioned on $u_{0:k}, \bar{s}_{1:k}^a, u_{0:k}^a$. Moreover, $f_{k+1}^c$ is defined given the adversary's knowledge of $\mathcal{S},f(A_1,A_2,B,C)$. 
\end{remark}
Observe that \eqref{PosteriorCramer} gives us an expected lower bound for the error matrix associated with the entire state history $\zeta_{0:k+1}$ with knowledge of measurements $\bar{y}_{1:k}$. This expectation is taken over the state history as well the measurement $\bar{y}_{k+1}$ so that $\hat{\zeta}_{0:k+1}$ is a function of the measurement $\bar{y}_{k+1}$. Observe that unlike the traditional Cramer-Rao bound which is limited to unbiased estimators, the Bayesian Cramer-Rao bound here considers both biased and unbiased estimators $\hat{\zeta}$.

While the lower bound given here applies to the entire state history $\zeta_{0:k+1}$, in practice we care about estimating a lower bound on the current state $\zeta_{k+1}$. Nonetheless, it can be easily shown that
\begin{equation}
\mathbb{E}_{f_{k+1}^c}\left[e_{k+1}e_{k+1}^T|\bar{y}_{1:k}\right] \ge I^{-1}(\zeta_{k+1}|\bar{y}_{1:k}), 
\end{equation} 
where $I^{-1}(\zeta_{k+1}|\bar{y}_{1:k})$ is the $\mbox{dim}(\zeta_{k}) \times \mbox{dim}(\zeta_{k})$ lower right submatrix of $I^{-1}(\zeta_{0:k+1}|\bar{y}_{1:k})$. In practice, computing $I^{-1}(\zeta_{k+1}|\bar{y}_{1:k})$ from $I^{-1}(\zeta_{0:k+1}|\bar{y}_{1:k})$ is impractical since it requires computing and taking the inverse of a Fisher information matrix which grows in dimension at each time step. As a result, we would like a recursion to compute $I^{-1}(\zeta_{k+1}|\bar{y}_{1:k})$. From \cite{Zuo2011} we have the following result,
\begin{equation}
I(\zeta_{k+1}|\bar{y}_{1:k}) = D_k^{22} - D_k^{21}\left[D_k^{11} + I_A(\zeta_k|\bar{y}_{1:k}) \right]^{-1}D_k^{12},
\end{equation}
where 
\begin{align*}
&D_k^{11} = \mathbb{E}_{f_{k+1}^c}\left[-\triangle_{\zeta_k}^{\zeta_k} \mbox{log } f(\zeta_{k+1}|\zeta_k,\bar{y}_{1:k}) \right], \\ 
&D_k^{12} =   \mathbb{E}_{f_{k+1}^c}\left[-\triangle_{\zeta_k}^{\zeta_{k+1}} \mbox{log } f(\zeta_{k+1}|\zeta_k,\bar{y}_{1:k}) \right]  = (D_k^{21})^T, \\
&D_k^{22} = \mathbb{E}_{f_{k+1}^c}\left[-\triangle_{\zeta_{k+1}}^{\zeta_{k+1}} \mbox{log } f(\zeta_{k+1}|\zeta_k, \bar{y}_{1:k}) f(\bar{y}_{k+1}|\zeta_{k+1})  \right].
\end{align*}
In addition,
\begin{align}
I_A (\zeta_k|\bar{y}_{1:k}) &= E_k^{22} - E_k^{21}\left(E_k^{11}\right)^{-1}E_k^{12},
\end{align}
where 
\begin{align*}
E_k^{11} &= \mathbb{E}_{f(\zeta_{0:k}|\bar{y}_{1:k})}\left[-\triangle_{\zeta_{0:k-1}}^{\zeta_{0:k-1}} \mbox{log } f(\zeta_{0:k}|\bar{y}_{1:k}) \right], \\
E_k^{12} &= \mathbb{E}_{f(\zeta_{0:k}|\bar{y}_{1:k})}\left[-\triangle_{\zeta_{0:k-1}}^{\zeta_{k}} \mbox{log } f(\zeta_{0:k}|\bar{y}_{1:k}) \right] = (E_k^{21})^{T}, \\
E_k^{22} &=  \mathbb{E}_{f(\zeta_{0:k}|\bar{y}_{1:k})}\left[-\triangle_{\zeta_{k}}^{\zeta_{k}} \mbox{log } f(\zeta_{0:k}|\bar{y}_{1:k}) \right].
\end{align*}
We observe that it is still difficult to obtain matrices $E_k^{11},E_k^{12},E_k^{21},E_k^{22}$ so \cite{Zuo2011} introduces the following approximate recursion 
\begin{equation}
I_A (\zeta_k|\bar{y}_{1:k}) \approx S_k^{22} - S_k^{12~T}\left[S_k^{11} + I_A(\zeta_{k-1}|\bar{y}_{1:k-1}) \right]^{-1}S_k^{12},
\end{equation}
where
\begin{align*}
&S_k^{11} = \mathbb{E}_{f(\zeta_{0:k}|\bar{y}_{1:k})}\left[-\triangle_{\zeta_{k-1}}^{\zeta_{k-1}} \mbox{log } f(\zeta_{k}|\zeta_{k-1},\bar{y}_{1:k-1}) \right], \\ 
&S_k^{12} =   \mathbb{E}_{f(\zeta_{0:k}|\bar{y}_{1:k})}\left[-\triangle_{\zeta_{k-1}}^{\zeta_k} \mbox{log } f(\zeta_{k}|\zeta_{k-1},\bar{y}_{1:k-1}) \right], \\
&S_k^{22} = \mathbb{E}_{f(\zeta_{0:k}|\bar{y}_{1:k})}\left[-\triangle_{\zeta_k}^{\zeta_k} \mbox{log } f(\zeta_{k}|\zeta_{k-1}, \bar{y}_{1:k-1}) f(\bar{y}_{k}|\zeta_{k})  \right].
\end{align*}
We observe that in practice it may still be difficult to compute the exact expectations because high dimensional integration is generally involved. Nonetheless, particle filters as described in \cite{Arulampalam2002} can be used to approximate these expectations. Alternative approximations for the  conditional posterior Cramer-Rao lower bound can be found in \cite{Zheng2012}. Unconditional bounds can be found in \cite{Tichavsky1998}.

\subsection{Bounds on Detection}
The algorithm described allows us to compute an approximate lower bound on the mean square error matrix of the attacker's state $\zeta_k$ for a given set of inputs $u_{0:k}^a, \bar s_{1:k}^a$ and observation history $\bar y_{1:k}$. This allows us to obtain a lower bound on the value of $g_k(z_k)$ as follows.
\begin{theorem}
Consider the special case that $\{C_j\}$ is known to the adversary for all $j \in \mathbb{Z}$. Suppose an attacker attempts to estimate $\zeta_{k} = \{\bar{x}_{k},\hat{\bar{x}}_{k},\mathcal{P}_{k}\}$ as in attack strategy 2. Let $\hat{\bar{x}}_{k}^e(\bar{y}_{k})$ be an estimate of $\hat{\bar{x}}_{k}$ as a function of $\bar{y}_{k}$ given $\bar y_{1:k-1}$ and $\hat{e}_{k} = \hat{\bar{x}}_{k}-\hat{\bar{x}}_{k}^e(\bar{y}_{k})$. Suppose a lower bound $Z$ on the error matrix of $\hat{\bar{x}}_{k}$ is obtained so that
\begin{equation}
\mathbb{E}_{f_{k}^c}\left[ \hat{e}_{k}\hat{e}_{k}^T \right]\ge Z_k. \label{assumption}
\end{equation}
Then we have
\begin{equation}
\underset{{\bar{y}}_{k}^a}{\mbox{min}}~ \mathbb{E}_{f^*}\left[g_{k}(z_{k})\right] \ge \mbox{tr}(\mathcal{C}_{k}^T\bar{\mathcal{P}}_{k}^{-1}\mathcal{C}_{k} Z_k),
\end{equation}
where  $f^* = {f(\hat{\bar{x}}_{k}, \bar{y}_{k} |\mathcal{I}_{k-1}^{A \cup P},{u}_{k-1}^a, \bar{s}_{k-1}^a, {u}_{k-1})}$. 
\end{theorem}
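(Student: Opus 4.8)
The plan is to recognize that, under attack strategy~2, the detector statistic is a quadratic form in the residual $z_k = \bar{y}_k^a - \mathcal{C}_k\hat{\bar{x}}_k$, that the attacker's choice of the fake measurement $\bar{y}_k^a$ is exactly the problem of \emph{estimating} $\mathcal{C}_k\hat{\bar{x}}_k$ from the information $\mathcal{I}_k^{A\cup P}$ he holds at time $k$, and that, because $\{C_j\}$ is known in this special case, that estimation problem is a \emph{known} linear image of the problem of estimating $\hat{\bar{x}}_k$ --- for which \eqref{assumption} already supplies the lower bound $Z_k$. Throughout I work conditionally on the realized values of $\mathcal{C}_k$ and $\mathcal{P}_k$ (equivalently $\bar{\mathcal{P}}_k$); this is the natural reading, since these are exactly the matrices on the right-hand side, since $\mathcal{C}_k$ is known to the adversary by hypothesis, and since the conditional posterior Cram\'er--Rao bound of the previous subsection may be taken conditional on them.

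First I would rewrite $g_k(z_k) = z_k^T\bar{\mathcal{P}}_k^{-1}z_k = \operatorname{tr}\!\big(\bar{\mathcal{P}}_k^{-1} z_k z_k^T\big)$ with $z_k = \bar{y}_k^a - \mathcal{C}_k\hat{\bar{x}}_k$, so that for every admissible $\bar{y}_k^a$ --- a measurable function of the adversary's time-$k$ information, which in particular includes the true output $\bar{y}_k$ ---
\begin{equation*}
\mathbb{E}_{f^*}\!\big[g_k(z_k)\big] = \operatorname{tr}\!\Big(\bar{\mathcal{P}}_k^{-1}\,\mathbb{E}_{f^*}\!\big[z_k z_k^T\big]\Big),
\end{equation*}
and here $\bar{y}_k^a$ is an estimator of the random vector $\mathcal{C}_k\hat{\bar{x}}_k$ while $z_k$ is its error, so $\mathbb{E}_{f^*}[z_k z_k^T]$ is an estimation-error matrix.

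Next I would lower bound that matrix in the positive-semidefinite order. Conditioning on $\mathcal{I}_k^{A\cup P}$ and using that the conditional mean is the minimum-mean-square-error estimator gives $\mathbb{E}_{f^*}[z_k z_k^T]\succeq \mathbb{E}_{f^*}\!\big[\operatorname{Cov}(\mathcal{C}_k\hat{\bar{x}}_k\mid\mathcal{I}_k^{A\cup P})\big]$ for every $\bar{y}_k^a$. Because $\mathcal{C}_k$ is known, $\mathbb{E}[\mathcal{C}_k\hat{\bar{x}}_k\mid\mathcal{I}_k^{A\cup P}]=\mathcal{C}_k\,\mathbb{E}[\hat{\bar{x}}_k\mid\mathcal{I}_k^{A\cup P}]$, so this covariance factors as $\mathcal{C}_k\,\mathbb{E}_{f^*}\!\big[\hat{e}_k\hat{e}_k^T\big]\,\mathcal{C}_k^T$, where $\hat{e}_k$ is the error of the estimator $\hat{\bar{x}}_k^e(\bar{y}_k)=\mathbb{E}[\hat{\bar{x}}_k\mid\mathcal{I}_k^{A\cup P}]$ --- an estimator of precisely the form appearing in \eqref{assumption}. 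Hence $\mathbb{E}_{f^*}[\hat{e}_k\hat{e}_k^T]\succeq Z_k$ and therefore $\mathbb{E}_{f^*}[z_k z_k^T]\succeq \mathcal{C}_k Z_k\mathcal{C}_k^T$ for every $\bar{y}_k^a$.

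Finally, since $\bar{\mathcal{P}}_k^{-1}\succ0$ the map $M\mapsto\operatorname{tr}(\bar{\mathcal{P}}_k^{-1}M)$ is monotone on the semidefinite cone, so the previous bound together with the cyclic property of the trace yields
\begin{equation*}
\mathbb{E}_{f^*}\!\big[g_k(z_k)\big]\;\ge\;\operatorname{tr}\!\big(\bar{\mathcal{P}}_k^{-1}\mathcal{C}_k Z_k\mathcal{C}_k^T\big)\;=\;\operatorname{tr}\!\big(\mathcal{C}_k^T\bar{\mathcal{P}}_k^{-1}\mathcal{C}_k Z_k\big),
\end{equation*}
and minimizing over $\bar{y}_k^a$ preserves the inequality (the minimum is in fact attained by attack strategy~2, whose fake measurement equals $\mathbb{E}[\mathcal{C}_k\hat{\bar{x}}_k\mid\mathcal{I}_k^{A\cup P}]$, i.e.\ Theorem~2 with $\Sigma=\bar{\mathcal{P}}_k^{-1}$). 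The step I expect to be the crux is the reduction in the third paragraph --- identifying the attacker's faking problem with estimation of $\hat{\bar{x}}_k$ pushed through the \emph{known} output map $\mathcal{C}_k$, so that the matrix inequality \eqref{assumption} transports to a bound on $\mathbb{E}[z_k z_k^T]$. This is exactly where the hypothesis that $\{C_j\}$ is known enters (otherwise the conditional mean of $\mathcal{C}_k\hat{\bar{x}}_k$ need not be $\mathcal{C}_k$ times that of $\hat{\bar{x}}_k$), and it is also where one must be careful that $\bar{y}_k^a$ is permitted to depend on the current true output $\bar{y}_k$, so that it is a legitimate estimator of the kind used in \eqref{assumption}; the monotonicity and cyclic-trace manipulations and the bookkeeping of conditioning on $\bar{\mathcal{P}}_k$ are routine by comparison.
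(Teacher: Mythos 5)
Your proof is correct and follows essentially the same route as the paper's: rewrite $\mathbb{E}_{f^*}[g_k(z_k)]$ as a trace of a weighted error matrix, use the known-$\mathcal{C}_k$ hypothesis to reduce the attacker's choice of $\bar{y}_k^a$ to estimation of $\hat{\bar{x}}_k$ pushed through $\mathcal{C}_k$, invoke the assumed bound $Z_k$ (noting $f^*$ is the relevant marginal of $f_k^c$), and finish with trace monotonicity and cyclicity. The only difference is cosmetic: where the paper argues that a minimizer of the trace objective lies in the range space of $\mathcal{C}_k$ and then interchanges the minimization with the trace, you lower-bound the error matrix for every admissible $\bar{y}_k^a$ by the conditional covariance (MMSE property), which is a slightly cleaner justification of the same reduction.
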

\vspace{.2cm}
 \begin{proof}
 First, observe from remark \ref{fk1c}
 \begin{equation}
 f(\zeta_{0:k},\bar{y}_{k}|\mathcal{I}_{k-1}^{A \cup P},{u}_{k-1}^a, \bar{s}_{k-1}^a, {u}_{k-1}) = f_{k}^c.  \label{fstar}
 \end{equation}
 We now have the following.
 \begin{align}
 &\underset{{\bar{y}}_{k}^a}{\mbox{min}}~ \mathbb{E}_{f^*}\left[g_{k}(z_{k})\right] \\
 &=  \underset{{\bar{y}}_{k}^a}{\mbox{min}}~ \mathbb{E}_{f^*}\left[\mbox{tr}\left((\bar{y}_{k}^a-\mathcal{C}_{k}\hat{\bar{x}}_{k})(\bar{y}_{k}^a-\mathcal{C}_{k}\hat{\bar{x}}_{k})^T \bar{\mathcal{P}}_{k}^{-1}\right)\right], \nonumber \\
  &=  \underset{{\bar{y}}_{k}^a}{\mbox{min}}~ \mbox{tr}\left(\mathbb{E}_{f^*}\left[(\bar{y}_{k}^a-\mathcal{C}_{k}\hat{\bar{x}}_{k})(\bar{y}_{k}^a-\mathcal{C}_{k}\hat{\bar{x}}_{k})^T \bar{\mathcal{P}}_{k}^{-1}\right]\right), \nonumber \\
    &=   \mbox{tr}\left(\underset{{\bar{y}}_{k}^a}{\mbox{min}}~\left(\mathbb{E}_{f^*}\left[(\bar{y}_{k}^a-\mathcal{C}_{k}\hat{\bar{x}}_{k})(\bar{y}_{k}^a-\mathcal{C}_{k}\hat{\bar{x}}_{k})^T \right] \right) \bar{\mathcal{P}}_{k}^{-1}\right), \nonumber \\
        &=   \mbox{tr}\left(\underset{\hat{\bar{x}}_{k}^e}{\mbox{min}}~\left(\mathbb{E}_{f^*}\left[(\hat{\bar{x}}_{k}^e-\hat{\bar{x}}_{k})(\hat{\bar{x}}_{k}^e-\hat{\bar{x}}_{k})^T \right] \right) \mathcal{C}_k^T \bar{\mathcal{P}}_{k}^{-1}  \mathcal{C}_k\right), \nonumber \\
           &=   \mbox{tr}\left(\underset{\hat{\bar{x}}_{k}^e}{\mbox{min}}~\left(\mathbb{E}_{f_k^c}\left[(\hat{\bar{x}}_{k}^e-\hat{\bar{x}}_{k})(\hat{\bar{x}}_{k}^e-\hat{\bar{x}}_{k})^T \right] \right) \mathcal{C}_k^T \bar{\mathcal{P}}_{k}^{-1}  \mathcal{C}_k\right), \nonumber \\
                       &\ge \mbox{tr}(\mathcal{C}_{k}^T\bar{\mathcal{P}}_{k}^{-1}\mathcal{C}_{k} Z_k). \nonumber
 \end{align}
 The first two equalities follow from properties of the trace and expectation. The third equality follows from monotonicity properties of the trace function and the fact that $\bar{\mathcal{P}}_k^{-1}$ is constant with respect to $f^*$. The fourth equality is based on the fact that given $\mathcal{C}_k$, a minimizer lies in the range space of $\mathcal{C}_k$. The fifth equality is due to \eqref{fstar}. The final inequality follows from \eqref{assumption}. 
 \end{proof}
\begin{remark}
In general, the adversary's ability to estimate $\{\zeta_k\}$ is dependent on the inputs $\{u_k^a\}, \{\bar s_k^a\}$. For instance, the more the adversary biases the state away from its expected region of operation, the more challenging it is to perform estimation. Thus, if the system operator wishes to analyze how well an adversary can generate stealthy outputs, he must consider a particular sequence of attack inputs $u_k^a, \bar s_k^a$.
\end{remark}
\begin{remark}
In practice, it may be difficult to perform performance analysis when assuming $\mathcal{P}_k$ is an unknown state. However, one can still approximate a lower bound on the error matrix by assuming that the adversary has an oracle which allows him to know $\mathcal{P}_k$, $\mathcal{K}_k$, $I-\mathcal{K}_k\mathcal{C}_k$. 
\end{remark}

%

 \section{Conclusion}
 In this paper, we have considered attacks on control systems where an adversary has access to all channels in a communication network. In order to counter such an adversary, we propose introducing time-varying dynamics into the system which are unknown to the adversary and can in turn be leveraged to detect attacks. Future work will consider sufficient conditions for the design of these matrices to prevent zero-dynamic attacks and the analysis of optimal identification techniques for the adversary.
 
\bibliographystyle{IEEEtran}
\bibliography{IEEEfull,CDC2015_SeanWeerakkody2}

\end{document}